\newcommand{\beq}{\begin{equation}}
\newcommand{\enq}{\end{equation}}
\newcommand{\beqa}{\begin{eqnarray}}
\newcommand{\enqa}{\end{eqnarray}}
\newcommand{\be}{\beta}
\newcommand{\qed}{\hfill $\Box$}
\newtheorem{theorem}{Theorem}
\newenvironment{proof}{{\sl Proof\/}:\ \ }{\qed\vspace{\baselineskip}}
\def\spacing#1{\renewcommand{\baselinestretch}{#1}\large\normalsize}
\def\bbC{{\sf C}\kern -6pt {\sf C}}
\def\bbF{{\sf F}\kern -5pt {\sf F}}
\def\bbR{{\sf R}\kern -6pt {\sf R}}
\def\bbZ{{\sf Z}\kern -5pt {\sf Z}}
\def\sfbegin{\begingroup\sf}
\def\sfend{\endgroup}
\def\be{\begin{eqnarray*}}
\def\ee{\end{eqnarray*}}
\title{Are Slepian-Wolf Rates Necessary for Distributed Parameter Estimation?}
\author{Mostafa El Gamal and Lifeng Lai\\Department of Electrical and Computer Engineering\\Worcester Polytechnic Institute\\\{melgamal, llai\}@wpi.edu\thanks{The work of Mostafa El Gamal and Lifeng Lai was supported by the Qatar National
Research Fund under Grant QNRF-6-1326-2-532. }}
\date{ }
\begin{document}
\maketitle \spacing{1} 

\begin{spacing}{1}

\begin{abstract}
We consider a distributed parameter estimation problem, in which multiple terminals send messages related to their local observations using limited rates to a fusion center who will obtain an estimate of a parameter related to observations of all terminals. It is well known that if the transmission rates are in the Slepian-Wolf region, the fusion center can fully recover all observations and hence can construct an estimator having the same performance as that of the centralized case. One natural question is whether Slepian-Wolf rates are necessary to achieve the same estimation performance as that of the centralized case. In this paper, we show that the answer to this question is negative. We establish our result by explicitly constructing an asymptotically minimum variance unbiased estimator (MVUE) that has the same performance as that of the optimal estimator in the centralized case while requiring information rates less than the conditions required in the Slepian-Wolf rate region.
\end{abstract}

\begin{keywords}
 Distributed learning, MVUE, Slepian-Wolf rates,  universal encoding/decoding scheme.
\end{keywords}

\section{Introduction}\label{sec:intro}
%
There are two main different setups for statistical learning: centralized learning and distributed learning. In the centralized learning, which has been studied extensively, all data is available at a centralized location. In the distributed learning, data is stored in multiple terminals. The distributed learning setup has attracted significant recent research interests as the data involved in learning is increasingly large in volume and might be stored in multiple terminals~\cite{maxim,aolin,jordan,shamir}. For the distributed learning, each terminal either has a few observations about all variables, or has full knowledge about a subset of variables (all observations about a subset of variables). The first scenario is relatively easier since each terminal can still make its own local inference without even communicating with each other, while communication between terminals is essential for the second scenario. In this paper, we focus on the more challenging second scenario.

In particular, we consider a distributed parameter estimation problem. In the setup considered, there are two random variables $(X,Y)$ with a joint probability mass function (PMF) $P_{\theta}(X,Y)$ parameterized by an unknown parameter $\theta$. Two terminals $A$ and $B$ observe $X^n$ and $Y^n$ respectively and send messages related to their own local observations with limited rates to terminal $C$, which will then obtain an estimate of the unknown parameter. It is well known that if the transmission rates from the terminals are inside the Slepian-Wolf rate region~\cite{slepian}, there exists a universal coding scheme~\cite{csiszar} that enables terminal $C$ to fully recover $(X^n,Y^n)$. Hence, once the transmission rates are inside the Slepian-Wolf rate region, the performance of the best estimator for the distributed setup is the same as that of the best estimator for the centralized case.

One natural question is: are Slepian-Wolf rates \emph{necessary} to achieve the same estimation performance as that of the centralized case? The answer to this question has significant implications in the distributed estimation. If the answer is yes, then to obtain the best estimate of the unknown parameter requires transmission rates to be so high that they are sufficient to fully recover the observations at the decoder, hence no rate reduction is possible. On the other hand, if the answer is no, then the observations can be compressed beyond the limits of source coding for full observation recovery. At a first glance, the answer to this question should be no as we are only interested in estimating a parameter related to the observations and are not interested in recovering the observations themselves. However, all existing related works indicate otherwise. For example, \cite{zia} addressed the same question and suggested that Slepian-Wolf rates might be necessary. In addition, the performance of the best known estimator by Han and Amari~\cite{han} does not match that of the centralized case when the information rates are outside of the Slepian-Wolf rate region. Furthermore,~\cite{ahlswede} showed that, under certain conditions, extracting even one bit of information from distributed sources is as hard as recovering full observations and hence requires the information rates to be in the Slepian-Wolf rate region.

In this paper, we show that the answer to this question is indeed \emph{no}. We establish our result by explicitly constructing a distributed estimator that achieves the same performance as that of the optimal estimator for the centralized case while using information rates outside of the Slepian-Wolf region. In particular, we consider binary symmetric sources (i.e., both $X^n$ and $Y^n$ are binary sequences) parameterized by an unknown parameter $\theta$. In our scheme, we first design a universal coding/decoding scheme that enables terminal $C$ to compute $Z^n=X^n\oplus Y^n$, which can be achieved using rates outside of the Slepian-Wolf rate region, and then construct an estimator using $Z^n$. We show that our estimator is an asymptotically minimum variance unbiased estimator (MVUE) \cite{poor} and achieves the same variance index as that of the best estimator in the centralized case. We further extend our scheme to a more general class of joint PMFs and show that our scheme can also achieve the same performance as that of the best estimator in the centralized case while using transmission rates less than the conditions required in the Slepian-Wolf rate region. The key idea of our scheme is, instead of fully recovering the source observations, we aim to recover sufficient statistics at terminal $C$ using less information rates.

The rest of the paper is organized as follows. We introduce the problem formulation in Section~\ref{sec:formulation}. In Section~\ref{sec:binary}, we establish our main results for the binary symmetric sources. We extend our work to a more general class of information sources in Section~\ref{sec:general}. We present the simulation results in Section~\ref{sec:simulation}. Finally, we conclude the paper in Section~\ref{sec:conclusion}.

\section{Problem Formulation} \label{sec:formulation}
Consider two information sources $X$ and $Y$ taking values from the discrete alphabets $\mathcal X$ and $\mathcal Y$, respectively. $(X^n,Y^n)=\{(X_i,Y_i)\}_{i=1}^n$ are $n$ independently and identically distributed (i.i.d.) observations drawn according to the parametric joint PMF $P_\theta(X,Y)$ where $\theta \in \Theta$ is the unknown parameter. We consider a distributed setup in which $X^n$ are observed at terminal $A$ and $Y^n$ are observed at terminal $B$. Using limited rates, these two terminals send messages related to their own local observations to a fusion center (terminal $C$), which will then obtain an estimate $\hat \theta$ of $\theta$ using these messages. The setup is illustrated in Fig.~\ref{Fig-1}.

\begin{figure}[htb]
\centering
\includegraphics[width=0.4\textwidth]{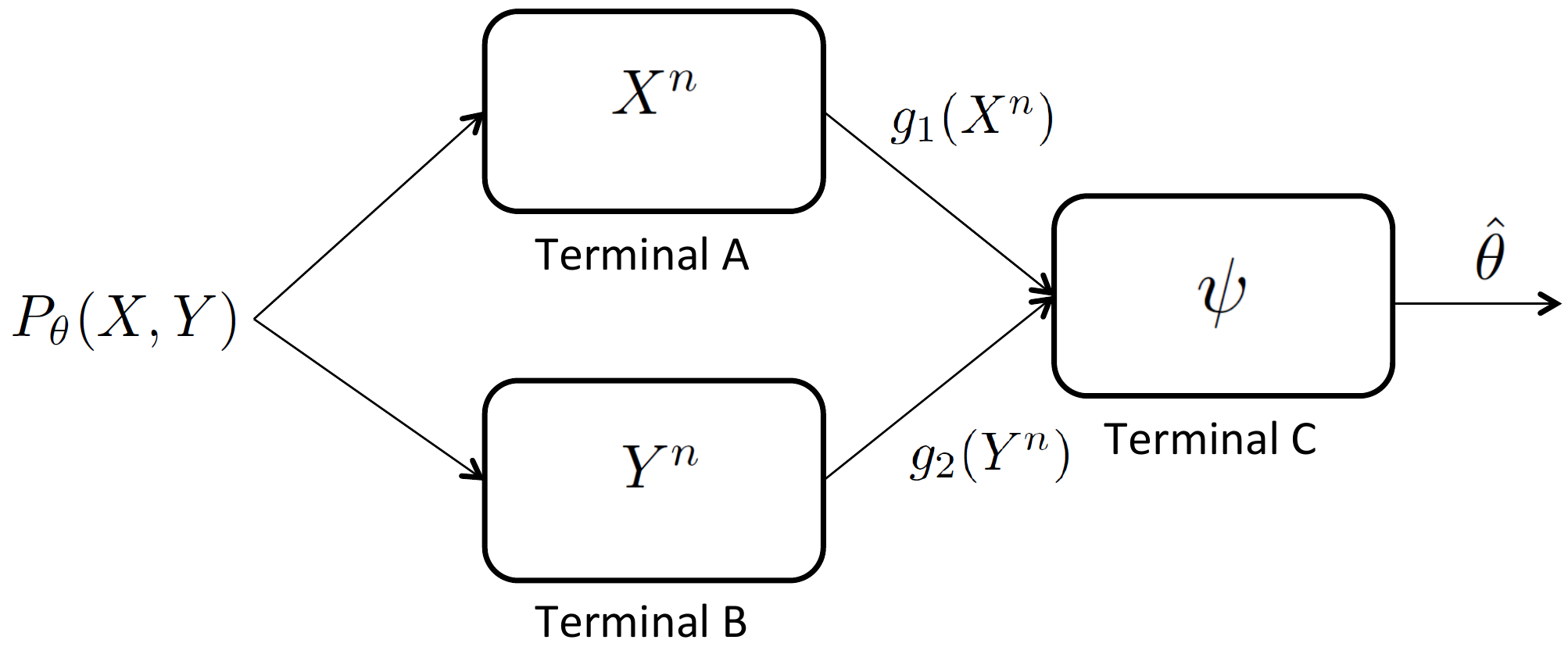}\\
\caption{System Model.}
\label{Fig-1}
\end{figure}

In particular, terminal $A$ employs an encoding function $g_1: X^n \to g_1(X^n)$, while terminal $B$ employs an encoding function $g_2: Y^n \to g_2(Y^n)$. The code rates are
\begin{eqnarray}
R_X= \frac{\log {||g_1||}}{n}, R_Y=  \frac{\log {||g_2||}}{n},
\end{eqnarray}
where $||g_i||$ is the cardinality of the encoding function $g_i$.

From $g_1(X^n)$ and $g_2(Y^n)$, the decoder obtains an estimate $\hat{\theta}$ of the unknown parameter $\theta$ using estimator $\psi$:
\begin{eqnarray}
\hat \theta = \psi(g_1(X^n),g_2(Y^n)).
\end{eqnarray}

To evaluate the quality of the estimator, we use the variance index that is defined as
\begin{eqnarray}
V[\hat \theta]=\lim_{n \rightarrow \infty} n\text{Var}_{\theta}[\hat \theta] = \lim_{n \rightarrow \infty} n\mathbb E_{\theta}[(\hat \theta - \mathbb E[\hat \theta])^2].
\end{eqnarray}
It is desirable to have an estimator that is asymptotically unbiased, i.e., $\mathbb E_{\theta}[\hat{\theta}] \to \theta$ as $n \to \infty$, and has a small variance index.

It is well-known that, if the coding rates satisfy (will be called Slepian-Wolf rates in the sequel)
\begin{eqnarray}
R_X \ge H_{\theta}(X|Y), \label{eq:ratex}\\
R_Y \ge H_{\theta}(Y|X), \label{eq:ratey}\\
R_X+R_Y \ge H_{\theta}(X,Y),\label{eq:ratexy}
\end{eqnarray}
there exists universal source coding schemes~\cite{csiszar} (i.e., the coding scheme does not depends on the value of the unknown parameter $\theta$) such that the decoder can reconstruct $X^n$ and $Y^n$ with a diminishing error probability. Here, $H_{\theta}(\cdot)$ and $H_{\theta}(\cdot|\cdot)$ denote the entropy and conditional entropy respectively\footnote{Throughout the paper, we use the subscript $\theta$ to emphasize the fact that value of the quantity of interest depends on the parameter $\theta$.}. Hence, if~\eqref{eq:ratex}-\eqref{eq:ratexy} are satisfied, we can obtain the same estimation performance as that of the centralized case. 

The question we ask in this paper is: are Slepian-Wolf rates \emph{necessary} to achieve the same estimation performance as that of the centralized case?~\cite{zia} investigated the same question and suggested that Slepian-Wolf rates appear to be necessary for achieving the centralized estimation performance. In this paper, we show that Slepian-Wolf rates are \emph{not} necessary. In particular, we show that there indeed exists a class of PMFs and the corresponding distributed estimators that require communication rates less than the Slepian-Wolf rates while still achieving the same performance as that of the best estimator for the centralized case.

Throughout the paper, we use an upper case letter $Z$ to denote a random variable, and a lower case letter $z$ to denote a realization of $Z$. For any sequence $z^n=(z(1),\cdots,z(n)) \in \mathcal{Z}^n$, the relative frequencies (empirical PMF) $\pi(a|z^n)\triangleq n(a|z^n)/n,\forall a\in\mathcal{Z}$ of the components of $z^n$ is called the type of $z^n$. 
Here $n(a|z^n)$ is the total number of indices $t$ at which $z(t)=a$.  

\section{Binary Symmetric Case} \label{sec:binary}
In this section, we consider the case of binary symmetric sources with $|\mathcal{X}|=|\mathcal{Y}|=2$ and a joint PMF of $(X,Y)$ as given in Table~\ref{bsc}, in which the unknown parameter $\theta \in \Theta = (0,1)$.
\begin{table}[htb] 
\centering
\begin{tabular}{l | c | c}
$X/Y$ & 0 & 1\\
\hline \hline
0 & $ \theta/ 2$ & $(1-\theta)/2$\\
1 & $(1-\theta)/2$ &  $\theta/ 2$\\
\end{tabular}
\caption{The joint PMF of binary symmetric sources.}
\label{bsc}
\end{table}

We show that, to estimate $\theta$ for this class of PMFs, we can achieve the centralized estimation performance using rates that do not satisfy~\eqref{eq:ratex}-\eqref{eq:ratexy}. We establish this result using two steps: 1) in the first step, we design a universal encoder at terminals $A$ and $B$ and universal decoder at terminal $C$ to compute the modulo-two sum $Z^n= X^n \oplus Y^n$; 2) in the second step, we construct an estimator using $Z^n$.


\subsection{Step 1: Comupting $Z^n$}\label{sub:encoding}
Here, we discuss how to universally compute $Z^n=X^n\oplus Y^n$ at terminal $C$. Towards this goal, we will use the same linear code at both encoders and use a minimum entropy decoder at terminal $C$.

Since the encoders at terminals $A$ and $B$ are the same, we use the following simplified notation
\begin{eqnarray}
f&=& g_1= g_2, \nonumber \\
R&=& R_X= R_Y \label{5}.
\end{eqnarray}

The following theorem shows that as long as $R\geq H_{\theta}(X|Y)=H_{\theta}(Y|X)$, the decoder can reconstruct $Z^n$ with a diminishing error probability.

\begin{theorem}\label{thm:decodingZ}
If 
\begin{equation}
R \ge H_{\theta}(X|Y)=H_{\theta}(Y|X),
\end{equation}
there exist universal encoding/decoding functions to reconstruct $Z^n=X^n\oplus Y^n$ at terminal $C$ with an exponentially decreasing error probability.
\end{theorem}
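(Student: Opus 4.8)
The plan is to exploit the classical observation of K\"orner and Marton that a common \emph{linear} code at both encoders reduces the problem of computing $Z^n=X^n\oplus Y^n$ to an ordinary point-to-point universal source coding problem for the single source $Z^n$. Concretely, I would fix a binary matrix $H$ of size $k\times n$ with $k=\lceil nR\rceil$ and take $f(x^n)=Hx^n$, arithmetic over $\mathrm{GF}(2)$, so that $\|f\|=2^{k}$ and $R_X=R_Y=k/n$, equal to $R$ up to a vanishing term. Terminal $C$ receives the two syndromes $HX^n$ and $HY^n$ and, by linearity, forms $HX^n\oplus HY^n=HZ^n$; thus terminal $C$ knows the syndrome of $Z^n$ with respect to the code $\{z^n: Hz^n=0\}$ even though it never sees $X^n$ or $Y^n$ individually. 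Note that for the PMF in Table~\ref{bsc} one has $P_\theta(Z=0)=\theta$, so $Z^n$ is i.i.d.\ and its per-letter entropy is exactly $H_\theta(X|Y)=H_\theta(Y|X)$.

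I would then take the decoder at terminal $C$ to be the \emph{minimum-entropy decoder}: among all $z^n$ whose syndrome equals the received $HZ^n$, output the one whose type $\pi(\cdot\,|\,z^n)$ has the smallest empirical entropy, breaking ties in a fixed way. Since neither $H$ nor this decoding rule depends on $\theta$, the resulting scheme is universal. For the error analysis I would first use a large-deviations bound for types to say that, except on an event of probability at most $e^{-cn}$ for some $c>0$, the realized sequence $z^n$ has a type with empirical entropy at most $H_\theta(X|Y)+\epsilon$. Conditioned on such a $z^n$, a decoding error can occur only if there is some $\tilde z^n\ne z^n$ with the same syndrome, i.e.\ with $\tilde z^n\oplus z^n\in\ker H$, and with type-entropy no larger than that of $z^n$. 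The number of candidate sequences $\tilde z^n$ with type-entropy at most $H_\theta(X|Y)+\epsilon$ is bounded by $(n+1)^{|\mathcal Z|}\,2^{n(H_\theta(X|Y)+\epsilon)}$ by the standard type-counting inequality, while for a \emph{random} $H$ with i.i.d.\ uniform entries each fixed nonzero vector lies in $\ker H$ with probability $2^{-k}=2^{-nR}$. A union bound then gives a conditional error probability, averaged over $H$, of at most $(n+1)^{2}\,2^{-n(R-H_\theta(X|Y)-\epsilon)}$, which is exponentially small as soon as $\epsilon<R-H_\theta(X|Y)$; adding the $e^{-cn}$ probability of an atypical $z^n$ shows that the $H$-averaged error probability decays exponentially, hence some deterministic $H$ does too. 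Finally, because the error event for a given $H$ is itself $\theta$-free (it only asks whether min-entropy decoding of $Hz^n$ returns $z^n$), the same expurgation argument can be run simultaneously over a fine grid of parameter values to produce one matrix $H$ that works for every $\theta$ with $H_\theta(X|Y)<R$; this is precisely the universal linear source coding result of~\cite{csiszar} specialized to the present network.

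I expect the main difficulty to be the error-exponent bookkeeping rather than any conceptual leap: one must do the type-counting \emph{after} conditioning on a typical $z^n$, so that the number of confusable sequences is controlled by $2^{nH_\theta(X|Y)}$ rather than by the ambient $2^{n}$, and one must combine the two error contributions (an atypical $Z^n$ and a nonzero kernel collision) so that the final exponent is strictly positive. A secondary technical point is the boundary case $R=H_\theta(X|Y)$ appearing in the statement: the argument above yields an exponential bound for $R>H_\theta(X|Y)$, and at equality it should be read either with an arbitrarily small rate margin or by tracking the polynomial prefactors more carefully. In all cases the construction that carries the proof is the same --- identical random linear codes at $A$ and $B$, syndrome addition at $C$, and minimum-entropy decoding.
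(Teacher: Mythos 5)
Your proposal follows essentially the same route as the paper: identical random linear codes at both encoders (the K\"orner--Marton trick), syndrome addition at terminal $C$, a minimum-entropy decoder, and a random-coding/type-counting expurgation argument to extract one good matrix. The only cosmetic differences are in the bookkeeping --- the paper organizes the error by the joint type of $(z^n,\tilde z^n)$ and expurgates a codebook $f^*$ that uniformly bounds the count $N_{f^*}^n(Z\tilde Z)$ over all joint types (a manifestly $\theta$-free condition), whereas you condition on a typical $z^n$ and appeal to a grid over $\theta$ for universality --- but both yield the same exponent and the same construction, so the approaches are substantively identical.
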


\begin{proof}
The proof follows a similar structure as the proofs in \cite{korner} and \cite{csiszar}. In particular, using the ideas in~\cite{csiszar}, we modify the proof of~\cite{korner} to make it universal. 


\noindent\textbf{Random Code Generation}: We use a linear code $f$ with an encoding matrix $A$ of size $n\times nR$ to map $\{0,1\}^n$ to $\{1,2,...,2^{nR}\}$. Hence $||f||=2^{nR}$. We independently generate each entry of $A$ using a uniform binary distribution, i.e., each entry of $A$ is $0$ or $1$ with probability $0.5$.

\noindent\textbf{Encoding}:
The encoded messages of the realizations $x^n \in \{0,1\}^n$ and $y^n \in \{0,1\}^n$ are
\begin{eqnarray}
f(x^n)&=&x^nA, \nonumber \\
f(y^n)&=&y^nA.
\end{eqnarray}

\noindent\textbf{Decoding}:
The decoder first combines the messages into a single message as
\begin{equation}
f(x^n) \oplus f(y^n),
\end{equation}
in which $\oplus$ denotes the element-wise xor.

It follows from the code linearity that
\begin{equation}
 f(x^n) \oplus f(y^n)=f(x^n \oplus y^n)=f(z^n).
\end{equation}

From $f(x^n\oplus y^n)$, terminal $C$ uses a minimum entropy decoder to obtain $\hat{z}^n$. In particular, for each $\bar{z}^n$ such that $f(\bar{z}^n)=f(x^n\oplus y^n)$, the minimum entropy decoder first calculates the entropy of its type, then picks the one that has the least entropy to be the decoded sequence.
In the following, to simplify the notation, we use $\bar{Z}^{(n)}$ and $Z^{(n)}$ to denote dummy random variables whose PMFs $P_{\bar{Z}^{(n)}}$ and $P_{Z^{(n)}}$ are the same as the types of $\bar{z}^n$ and $z^n$, respectively.
The final decoded message is denoted as
\begin{equation}
\hat z^n= \phi(f(z^n)),
\end{equation}
where $\phi$ denotes the minimum entropy decoding function. 

\noindent\textbf{Error Probability Analysis:}
A decoding error occurs if and only if there exists a sequence $\hat z^n \not= z^n$ such that
\begin{equation}
f(\hat z^n)= f(z^n) ~~\text{and}~~ H(\hat Z^{(n)}) \le H(Z^{(n)}).
\end{equation}

The error probability, averaging over all possible codebooks, is
\begin{eqnarray}
P_e^{(n)}= \sum_{z^n\in \{0,1\}^n} P_r(z^n)P_r(\hat z^n \neq z^n).\label{eq:errorprob}
\end{eqnarray}
To analyze the probability of the decoding error, let $\tilde z^n \in \{0,1\}^n$ denote another sequence such that
\begin{equation}
\tilde z^n \neq z^n, ~~~~~f(\tilde z^n) = f(z^n) \label{4}.
\end{equation}

Let $\tilde Z^{(n)}$ be a dummy random variable whose PMF $P_{\tilde Z^{(n)}}$ is the same as the type of $\tilde z^n$. Define $\mathcal P^{(n)}_{Z\tilde Z}$ as the set of all joint types between any two sequences $z^n$ and $\tilde z^n$. For any given $f$ (equivalently for a given encoding matrix $A$), define $N_f^n(Z\tilde Z)$ as the number of sequences $z^n$ such that there exists another sequence $\tilde{z}^n$ having the joint type $P_{Z^{(n)}\tilde Z^{(n)}} \in \mathcal P^{(n)}_{Z\tilde Z}$ and \eqref{4} holds.

Since each entry in $A$ is uniformly distributed, then each element in $f(z^n)$ is uniformly distributed if $z^n$ is a nonzero sequence. Therefore,
\begin{eqnarray}
P_r(f(z^n) = 0)=(0.5)^{nR}= \frac 1 {||f||},
\end{eqnarray}
in which the probability is computed over all codebooks. This implies that
\begin{eqnarray}
P_r(f(\tilde z^n) = f(z^n)) = P_r(f(\tilde z^n - z^n) = 0)= \frac 1 {||f||} \label{6}.
\end{eqnarray}

Define $T_{P_{Z^{(n)} \tilde Z^{(n)}}}$ as the set of all sequence pairs $(z^n,\tilde z^n)$ that have the joint type $P_{Z^{(n)} \tilde Z^{(n)}}$, $T_{P_{Z^{(n)}}}$ as the set of all sequences $z^n$ that have the marginal type $P_{Z^{(n)}}$, and $T_{P_{\tilde Z^{(n)} | Z^{(n)}}}(z^n)$ as the set of all sequences $\tilde z^n$ that have the joint type $P_{Z^{(n)}\tilde Z^{(n)}}$ with $z^n$. The sizes of the sets $T_{P_{Z^{(n)}}}$ and $T_{P_{\tilde Z^{(n)} | Z^{(n)}}}(z^n)$ are bounded as \cite{csiszar2}
\begin{eqnarray}
|T_{P_{Z^{(n)}}}| &\le& 2^{nH(Z^{(n)})},  \nonumber \\
|T_{P_{\tilde Z^{(n)} | Z^{(n)}}}(z^n)| &\le& 2^{nH(\tilde Z^{(n)}| Z^{(n)})+\epsilon} \label{7},
\end{eqnarray}
where $\epsilon$ is an arbitrary small number. Notice that, for any given $P_{Z^{(n)}\tilde Z^{(n)}}$, $N_f^n(Z\tilde Z)$ is a random variable (random over $f$) that can be expressed as
\begin{eqnarray}
&&N_f^n(Z\tilde Z)= \sum_{z^n \in T_{P_{Z^{(n)}}}} {\bf 1}\big(\exists\tilde z^n \not= z^n:f(\tilde z^n) = f(z^n), \nonumber \\
&&\hspace{34mm}\text{and } (z^n,\tilde z^n) \in T_{P_{Z^{(n)}\tilde Z^{(n)}}}\big) \nonumber \\
&&\hspace{14mm}= \sum_{z^n \in T_{P_{Z^{(n)}}}} {\bf 1}\big(\exists\tilde z^n \not= z^n:f(\tilde z^n) = f(z^n), \nonumber \\
&&\hspace{34mm}\text{and } \tilde z^n \in T_{P_{\tilde Z^{(n)} | Z^{(n)}}}(z^n)\big),
\end{eqnarray}
where $ {\bf 1}(\cdot)$ is the indication function. The expectation of $N_f^n(Z\tilde Z)$ over all possible codebooks $f$ is
\begin{eqnarray}
&&\hspace{-4mm}\mathbb E[N_f^n(Z\tilde Z)]\nonumber\\
&&\hspace{-4mm}= \sum_{z^n\in T_{P_{Z^{(n)}}}} \mathbb E\big[{\bf 1}\big(\exists~\tilde z^n \not= z^n~:~f(\tilde z^n) = f(z^n), \nonumber \\
&&\hspace{22mm}\text{and } \tilde z^n \in T_{P_{\tilde Z^{(n)} | Z^{(n)}}}(z^n)\big)\big] \nonumber \\
&&\hspace{-4mm}\le  \sum_{z^n\in T_{P_{Z^{(n)}}}} \sum_{\tilde z^n \in T_{P_{\tilde Z^{(n)} | Z^{(n)}}}(z^n)} P_r(f(\tilde z^n) = f(z^n))  \label{18}.
\end{eqnarray}

(\ref{6}), (\ref{7}), and (\ref{18}) imply that
\begin{equation}
\mathbb E[N_f^n(Z\tilde Z)] \le \frac {2^{n(H(Z^{(n)})+H(\tilde Z^{(n)}|Z^{(n)})+\epsilon)}} {||f||}.
\end{equation}

Applying the Markov's inequality, we have
\begin{eqnarray}
&&\hspace{-5mm}P_r\left(N_f^n(Z\tilde Z) \ge \frac {2^{n(H(Z^{(n)})+H(\tilde Z^{(n)}|Z^{(n)})+\epsilon)}(|\mathcal P^{(n)}_{Z\tilde Z}|+\delta)} {||f||}\right)\nonumber \\
&&\hspace{3mm}\le \frac {1} {|\mathcal P^{(n)}_{Z\tilde Z}|+\delta} \label{10},
\end{eqnarray}
where $|\mathcal P^{(n)}_{Z\tilde Z}|$ is the total number of possible joint types and $\delta$ is an arbitrary small number. To simplify the notation, let
\begin{equation}
B^n(Z \tilde Z) \triangleq \frac {2^{n(H(Z^{(n)})+H(\tilde Z^{(n)}|Z^{(n)})+\epsilon)}(|\mathcal P^{(n)}_{Z\tilde Z}|+\delta)} {||f||}.
\end{equation}

Considering all joint types $P_{Z^{(n)}\tilde Z^{(n)}}$ simultaneously, the union bound and (\ref{10}) imply that
\begin{eqnarray}
&&\hspace{-6mm}P_r\left(N_f^n(Z\tilde Z) \le B^n(Z \tilde Z),~\forall P_{Z^{(n)}\tilde Z^{(n)}} \in \mathcal P^{(n)}_{Z\tilde Z}\right)\nonumber
\\ &\ge& 1- \sum_{1}^{|\mathcal P^{(n)}_{Z\tilde Z}|} \frac 1 {|\mathcal P^{(n)}_{Z\tilde Z}|+\delta} \nonumber
\\ &>& 0\label{22}.
\end{eqnarray}

Since the probability in (\ref{22}) is positive, then there exists a codebook $f^*$ that the following equation holds for all joint types $P_{Z\tilde Z}$ simultaneously
\begin{equation}
N_{f^*}^n(Z\tilde Z) \le \frac {2^{n(H(Z^{(n)})+H(\tilde Z^{(n)}|Z^{(n)})+\epsilon)}(|\mathcal P^{(n)}_{Z\tilde Z}|+\delta)} {||f^*||} \label{11}.
\end{equation}

As $||f^*||=2^{nR}$ and $|\mathcal P^{(n)}_{Z\tilde Z}| \le (n+1)^4$, we further have
\begin{eqnarray} \label{eq:Nbound}
&&N_{f^*}^n(Z\tilde Z)\\\nonumber
 &&\le ((n+1)^4+\delta)~~2^{n(H(Z^{(n)})+H(\tilde Z^{(n)}|Z^{(n)})+\epsilon-R)}.
\end{eqnarray}

In the following, we will focus on $f^*$.

Let $P^{(n)}_{e,f^*}(Z \tilde Z)$ denote the portion of error probability associated with a fixed joint type $P_{Z^{(n)}\tilde Z^{(n)}}$
\begin{eqnarray}
&&P^{(n)}_{e,f^*}(Z \tilde Z)\\
&&\triangleq \sum_{z^n \in T_{P_{Z^{(n)}}}} P_r(z^n) {\bf 1}\big(\exists\tilde z^n \not= z^n:f^*(\tilde z^n) = f^*(z^n), \nonumber \\
&&\hspace{33mm}\text{and }(z^n,\tilde z^n) \in T_{P_{Z^{(n)}\tilde Z^{(n)}}}\big).\nonumber
\end{eqnarray}

The total decoding error probability $P^{(n)}_{e,f^*}$, when using $f^*$, can be expressed as
\begin{equation}
P^{(n)}_{e,f^*} = \sum_{P_{Z^{(n)}\tilde Z^{(n)}}} P^{(n)}_{e,f^*}(Z \tilde Z) \label{error}.
\end{equation}
Let $A_{\epsilon_1}^{(n)}$ denote the set of marginal types $P_{Z^{(n)}}$ such that $|P_{Z^{(n)}}(z=i)-P_\theta(z=i)|< \frac {\epsilon_1} 2$ for $i \in \{0,1\}$, where $\epsilon_1$ is an arbitrarily small number. Using the definition of $A_{\epsilon_1}^{(n)}$, (\ref{error}) can be rewritten as
\begin{eqnarray}
 P^{(n)}_{e,f^*} &=&\sum_{P_{Z^{(n)}\tilde Z^{(n)}}, P_{Z^{(n)}} \in A_{\epsilon_1}^{(n)}}  P^{(n)}_{e,f^*}(Z \tilde Z)
\nonumber\\
&&\hspace{6mm}+ \sum_{P_{Z^{(n)}\tilde Z^{(n)}}, P_{Z^{(n)}} \in \bar{A}_{\epsilon_1}^{(n)}}  P^{(n)}_{e,f^*}(Z \tilde Z) \nonumber \\ \label{error1}
&\triangleq& S_1 + S_2,
\end{eqnarray}
where $\bar{A}_{\epsilon_1}^{(n)}$ denotes the complimentary set of $A_{\epsilon_1}^{(n)}$.
For $S_2$, we have that
\begin{eqnarray}
P^{(n)}_{e,f^*}(Z \tilde Z) \le 2^{-n(D(P_{Z^{(n)}}||P_\theta(Z)))}, \label{error2}
\end{eqnarray}
where $D(P_{Z^{(n)}}||P_\theta(Z))$ is the KL divergence between the marginal type $P_{Z^{(n)}}$ and the true PMF $P_\theta(Z)$ of $Z=X \oplus Y$. Using Pinsker's inequality, for $P_{Z^{(n)}} \in \bar{A}_{\epsilon_1}^{(n)}$, we have
\begin{eqnarray}
D(P_{Z^{(n)}}||P_\theta(Z)) \ge 2\epsilon_1^2.
\end{eqnarray}
Therefore,
\begin{eqnarray}
S_2 &\le& \sum_{P_{Z^{(n)}\tilde Z^{(n)}}} 2^{-2n\epsilon_1^2} \nonumber \\
&\le& (n+1)^4~~2^{-2n\epsilon_1^2}. \label{error3}
\end{eqnarray}
(\ref{error3}) implies that $S_2 \rightarrow 0$ exponentially as $n \rightarrow \infty$.

For $S_1$, we have that
\begin{equation}
P^{(n)}_{e,f^*}(Z \tilde Z) \le N_{f^*}^n(Z\tilde Z)~~2^{-n(H(Z^{(n)})+D(P_{Z^{(n)}}||P_\theta(Z))}.
\end{equation}
Using~\eqref{eq:Nbound}, we further have
\begin{eqnarray}
&&P^{(n)}_{e,f^*}(Z \tilde Z) \le   \\ \nonumber &&((n+1)^4+\delta)~2^{-n\big(D(P_{Z^{(n)}}||P_\theta(Z))+R-H(\tilde Z^{(n)}|Z^{(n)})-\epsilon\big)}.
\end{eqnarray}
As we use the minimum entropy decoder, we have $H(\tilde Z^{(n)}) \le H(Z^{(n)})$, which implies $H(\tilde Z^{(n)}|Z^{(n)})\le H(\tilde Z^{(n)})\le H(Z^{(n)})$. Therefore,
\begin{eqnarray}
&&P^{(n)}_{e,f^*}(Z \tilde Z) \\ \nonumber &&\le ((n+1)^4+\delta)~2^{-n\big(D(P_{Z^{(n)}}||P_\theta(Z))+R-H(Z^{(n)})-\epsilon\big)}.
\end{eqnarray}

Since $P_{Z^{(n)}} \in A_{\epsilon_1}^{(n)}$, it is easy to check that
\begin{eqnarray}
|H(Z^{(n)})-H_{\theta}(Z)| \le D(P_{Z^{(n)}}||P_\theta(Z))+ \epsilon_2.
\end{eqnarray}
Here
\begin{eqnarray}
\epsilon_2=-\frac {\epsilon_1} 2 \sum_i \log P_\theta(z=i),
\end{eqnarray}
which can be made arbitrarily small as $\epsilon_1\downarrow 0$ for $\theta\in(0,1)$.

Therefore,
\begin{eqnarray}
&&\hspace{-6mm}P^{(n)}_{e,f^*}(Z \tilde Z)\\  \nonumber  && \le ((n+1)^4+\delta)~2^{-n\big(R-H_{\theta}(Z)-\epsilon_3\big)},
\end{eqnarray}
in which $\epsilon_3=\epsilon+\epsilon_2$.

This implies that $S_1 \rightarrow 0$ exponentially as $n \rightarrow \infty$ if
\begin{equation}
R \ge H_{\theta}(Z) \label{15}.
\end{equation}

Therefore, (\ref{15}) is sufficient to guarantee that $P^{(n)}_{e,f^*} \rightarrow 0$ exponentially as $n\rightarrow \infty$. It is easy to check that $H_{\theta}(Z)= H_{\theta}(X|Y) = H_{\theta}(Y|X)$. The proof is complete.

\end{proof}


Theorem~\ref{thm:decodingZ} implies that the required rates to decode $Z^n = X^n \oplus Y^n$ with a small error probability is
\begin{eqnarray}
R_X \ge H_{\theta}(X|Y), \\
R_Y \ge H_{\theta}(Y|X).
\end{eqnarray}

This rate region is larger than the Slepian-Wolf region in~\eqref{eq:ratex}-\eqref{eq:ratexy}, as the condition $R_X+R_Y \ge H_{\theta}(X,Y)$ is not necessary anymore.

\subsection{Step 2: Estimation} \label{sub:estimation}
After obtaining $\hat{Z}^n$, which is equal to $Z^n$ with a probability converging to 1 exponentially, we then design 
an asymptotically MVUE of $\theta$. Our estimator is
\begin{eqnarray}
\hat{\theta}=\frac{n(0|\hat{Z}^n)}{n},\label{eq:estimator}
\end{eqnarray}
in which the notation $n(\cdot|\cdot)$ is defined in Section~\ref{sec:formulation}.

\begin{theorem}\label{thm:estimate}
If the conditions in Theorem~\ref{thm:decodingZ} are satisfied, the estimator in~\eqref{eq:estimator} is an asymptotically MVUE and achieves the optimal variance index as that of the centralized case.
\end{theorem}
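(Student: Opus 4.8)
The plan is to use the fact that, for the joint law in Table~\ref{bsc}, the modulo-two sum $Z=X\oplus Y$ is a sufficient statistic for $\theta$, so that a decoder who has recovered $Z^n$ has lost nothing compared with a centralized observer who sees $(X^n,Y^n)$; the estimator in~\eqref{eq:estimator} is then nothing but the empirical frequency of $0$ in $Z^n$, which I will show is, up to an exponentially negligible decoding-error event, the centralized minimum-variance unbiased estimator itself.

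First I would record the relevant one-sample facts. From Table~\ref{bsc}, $P_\theta(x,y)=\tfrac12\,\theta^{{\bf 1}(x=y)}(1-\theta)^{{\bf 1}(x\neq y)}$, which depends on $(x,y)$ only through ${\bf 1}(x=y)={\bf 1}(z=0)$; hence $Z$ is sufficient for $\theta$, $Z^n$ is i.i.d.\ with $P_\theta(Z=0)=\theta$, and the per-sample Fisher information is $I_\theta=1/(\theta(1-\theta))$, computed either directly from $P_\theta(x,y)$ or via $Z$. Consequently, in the centralized problem the Cram\'er--Rao bound forces any unbiased estimator based on $n$ samples to have variance at least $\theta(1-\theta)/n$; moreover the empirical frequency $\tilde\theta\triangleq n(0|Z^n)/n$ --- which is exactly the centralized maximum likelihood estimator --- is unbiased with variance precisely $\theta(1-\theta)/n$, and by completeness of the type of $Z^n$ it is in fact the UMVUE. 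Thus the optimal centralized variance index equals $\theta(1-\theta)$, and it is attained.

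Next I would transfer this to the distributed estimator $\hat\theta=n(0|\hat Z^n)/n$. Let $\mathcal E_n=\{\hat Z^n=Z^n\}$; by Theorem~\ref{thm:decodingZ}, $P_\theta(\mathcal E_n^{\mathrm c})\le 2^{-cn}$ for some $c>0$ under the stated rate condition. On $\mathcal E_n$ one has $\hat\theta=\tilde\theta$, so $\hat\theta-\tilde\theta=(\hat\theta-\tilde\theta){\bf 1}(\mathcal E_n^{\mathrm c})$ is bounded in absolute value by $1$ and supported on $\mathcal E_n^{\mathrm c}$. Hence $|\mathbb E_\theta[\hat\theta]-\theta|=|\mathbb E_\theta[\hat\theta-\tilde\theta]|\le P_\theta(\mathcal E_n^{\mathrm c})\to 0$, so $\hat\theta$ is asymptotically unbiased; and writing $\mathrm{Var}_\theta[\hat\theta]=\mathrm{Var}_\theta[\tilde\theta]+2\,\mathrm{Cov}_\theta(\tilde\theta,\hat\theta-\tilde\theta)+\mathrm{Var}_\theta[\hat\theta-\tilde\theta]$ and bounding $|\mathrm{Cov}_\theta(\tilde\theta,\hat\theta-\tilde\theta)|$ and $\mathrm{Var}_\theta[\hat\theta-\tilde\theta]$ by Cauchy--Schwarz together with $\mathbb E_\theta[(\hat\theta-\tilde\theta)^2]\le P_\theta(\mathcal E_n^{\mathrm c})$, one gets $n\,\mathrm{Var}_\theta[\hat\theta]=\theta(1-\theta)+O(\sqrt{n}\,2^{-cn/2})\to\theta(1-\theta)$. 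Therefore $V[\hat\theta]=\theta(1-\theta)$, which matches the centralized optimum; combined with asymptotic unbiasedness and the centralized Cram\'er--Rao lower bound, this shows $\hat\theta$ is an asymptotically MVUE.

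The step I expect to be the only real obstacle is the last one --- controlling the contribution of the decoding-error event, which is both data-dependent and codebook-dependent, so that it perturbs the variance by $o(1/n)$ and hence leaves the variance index unchanged; the rest is sufficiency together with standard Cram\'er--Rao and Lehmann--Scheff\'e arguments and the exponential error bound already established in Theorem~\ref{thm:decodingZ}.
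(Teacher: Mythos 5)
Your proposal is correct and follows essentially the same high-level strategy as the paper: compute the centralized CRLB ($=\theta(1-\theta)/n$), show the estimator based on the exact $Z^n$ attains it, and then transfer the conclusion to $\hat\theta$ using the exponential decoding-error bound from Theorem~\ref{thm:decodingZ}. There are, however, two places where you do something genuinely different, and one of them actually tightens a loose spot in the paper. First, you open with the observation that $Z=X\oplus Y$ is a \emph{sufficient statistic} for $\theta$ (since $P_\theta(x,y)$ factors through $\mathbf{1}(x=y)$), which immediately explains \emph{why} the lossy reduction from $(X^n,Y^n)$ to $Z^n$ costs nothing and lets you compute the Fisher information directly from the Bernoulli$(\theta)$ law of $Z$; the paper instead computes the CRLB from the full joint likelihood~\eqref{likelihood}. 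Both routes give the same number, but yours is shorter and more conceptual, and your Lehmann--Scheff\'e remark is a valid alternative to the CRLB-matching argument. Second, and more substantively, your handling of the decoding-error event is more careful than the paper's. The paper writes $\mathbb E_\theta[\hat\theta]=P_r(\cdot)\,\mathbb E_\theta[\hat\theta_c]+(1-P_r(\cdot))K_1$ and a similar expression for the variance with constants $K_1,K_2$; strictly speaking this is not the law of total expectation, because $\mathbb E_\theta[\hat\theta\mid\mathcal E_n]\neq\mathbb E_\theta[\hat\theta_c]$ in general (conditioning on the decoding event changes the distribution of the type of $Z^n$). The conclusion is still correct because $P(\mathcal E_n^c)$ is exponentially small and everything is bounded, but the paper leaves this unaddressed. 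Your version --- write $\hat\theta-\tilde\theta=(\hat\theta-\tilde\theta)\mathbf{1}(\mathcal E_n^c)$, bound it by $1$, control $\mathbb E[|\hat\theta-\tilde\theta|]$ and $\mathbb E[(\hat\theta-\tilde\theta)^2]$ by $P(\mathcal E_n^c)$, and expand $\mathrm{Var}[\hat\theta]$ via Cauchy--Schwarz --- cleanly justifies that the perturbation to $n\,\mathrm{Var}_\theta[\hat\theta]$ is $O(\sqrt{n}\,2^{-cn/2})=o(1)$, which is precisely the claim the paper needs. In short: same overall architecture, but your treatment of the error event is the one I would actually trust on a close reading.
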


\begin{proof}

Consider the centralized case in which $X^n$ and $Y^n$ are both known perfectly. Let $\left(\frac {n_1} n,\frac {n_2} n,\frac {n_3} n,\frac {n_4} n\right)$ denote the joint type of the sequences $x^n$ and $y^n$, where $(n_1,n_2,n_3,n_4)$ are the frequencies of occurrence of the pairs $\{(0,0),(1,1),(0,1),(1,0)\}$, respectively. The joint PMF of $(x^n,y^n)$ is
\begin{eqnarray}
P_\theta(x^n,y^n)&=& \bigg(\frac \theta 2\bigg)^{(n_1+n_2)}\bigg(\frac {1-\theta} 2\bigg)^{(n_3+n_4)} \label{likelihood}
\end{eqnarray}

Consider the centralized estimator
\begin{equation}
\hat \theta_c= \frac {(n_1+n_2)} n \label{estim}.
\end{equation}

This estimator is unbiased since
\begin{equation}
\mathbb E_{\theta}[\hat \theta_c]= \theta.
\end{equation}

The variance of the estimator is calculated as
\begin{eqnarray}
\text{Var}_\theta[\hat \theta_c] &=& \frac 1 {n^2} \mathbb E_{\theta}[(n_1+n_2)^2] - \theta^2 \nonumber \\
&=& \frac {\theta(1-\theta)} n.
\end{eqnarray}

The variance index is given by
\begin{eqnarray}
V[\hat \theta_c] = \lim_{n \rightarrow \infty} n\text{Var}_\theta[\hat \theta_c] = \theta(1-\theta).
\end{eqnarray}

The Cramer-Rao lower bound (CRLB) of the centralized case 
\begin{eqnarray}
\text{CRLB}&=&-1/\mathbb E_{\theta}\bigg[\frac{\partial^2 \ln [P_\theta(x^n,y^n)]}{\partial^2 \theta}\bigg]\\
&=& \frac {\theta(1-\theta)} n=\text{Var}_\theta[\hat \theta_c].
\end{eqnarray}
This implies that $\hat{\theta}_c$ is an MVUE for the centralized case.

Now, come back to our decentralized case.
For our estimator
\begin{equation}
\hat \theta= \frac {n(0|\hat{Z}^n)} {n},
\end{equation}
we have that
\begin{equation}
P_r(n(0|\hat{Z}^n)=n_1+n_2) \ge 1-P_{e,f^*}^{(n)}, \label{45}
\end{equation}
in which $P_{e,f^*}^{(n)}$ is shown to converge to zero exponentially fast in Section~\ref{sub:encoding}.

Therefore
\begin{eqnarray}
&&\mathbb E_{\theta}[\hat \theta] = P_r(n(0|\hat{Z}^n)=n_1+n_2)\mathbb E_{\theta}[\hat \theta_c] \nonumber \\
&&\hspace{10mm}+(1-P_r(n(0|\hat{Z}^n)=n_1+n_2))K_1,
\end{eqnarray}
where $K_1 \in [0,1]$ is a constant. As $n \rightarrow \infty$, $P_{e,f^*}^{(n)}\rightarrow 0$ and hence $P_r(n(0|\hat{Z}^n)=n_1+n_2) \rightarrow 1$. Therefore,
\begin{equation}
\lim_{n \rightarrow \infty} \mathbb E_{\theta}[\hat \theta] = \mathbb E_{\theta}[\hat \theta_c]=\theta.
\end{equation}
This shows that our estimator is asymptotically unbiased. Similarly,
\begin{eqnarray}
&&\hspace{-7mm}V[\hat \theta]= \lim_{n \rightarrow \infty} n\text{Var}_\theta[\hat \theta] \nonumber \\
&&= \lim_{n \rightarrow \infty} (nP_r(n(0|\hat{Z}^n)=n_1+n_2)(\mathbb E_{\theta}[\hat \theta_c^2]-(\mathbb E_{\theta}[\hat \theta_c])^2) \nonumber \\
&&\hspace{10mm}+n(1-P_r(n(0|\hat{Z}^n)=n_1+n_2))(K_2-K_1^2)), \nonumber
\end{eqnarray}
where $K_2 \in [0,1]$ is a constant. As $n \rightarrow \infty$, $P_{e,f^*}^{(n)} \rightarrow 0$ exponentially. Therefore,
\begin{equation} \label{25}
V[\hat \theta] = \theta(1-\theta)=V[\hat{\theta}_c].
\end{equation}
This proves that our estimator is asymptotically unbiased and achieves the same minimum variance that can be achieved even in the centralized case. Hence, our estimator is optimal.
\end{proof}

\begin{figure}[htb]
\centering
\includegraphics[width=0.34\textwidth]{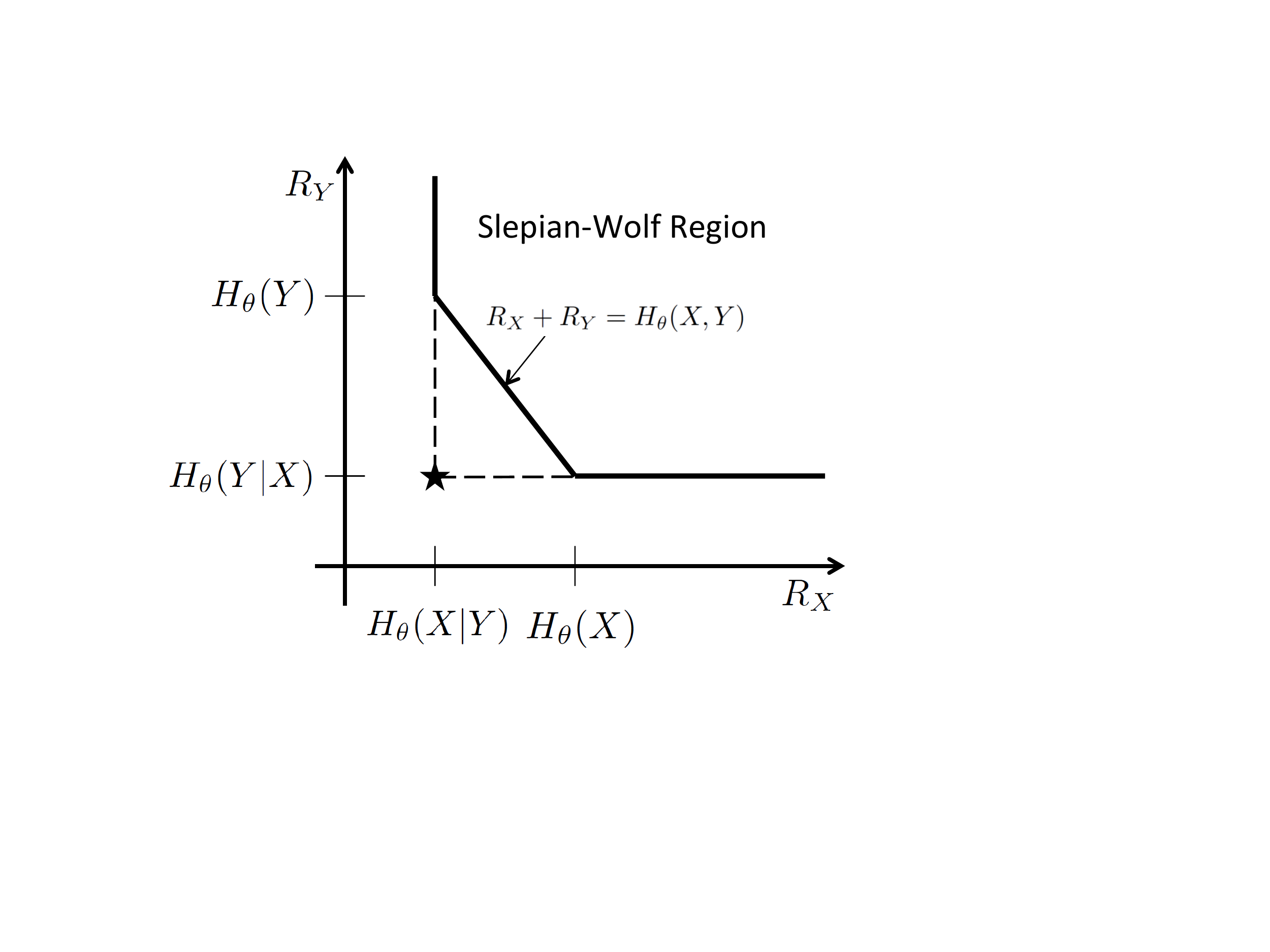}\\
\caption{$\bigstar$: the rate pair required in our estimator, which is outside of the Slepian-Wolf rate region.}
\label{Fig-2}
\end{figure}

Combining Theorems~\ref{thm:decodingZ} and~\ref{thm:estimate}, we conclude that, in the distributed parameter estimation, the Slepian-Wolf rates are not necessary to achieve the same optimal estimation performance as that of the centralized case. Fig.~\ref{Fig-2} illustrates the comparison between the Slepian-Wolf rate region and the rate pair used in our estimator.

\section{Extension} \label{sec:general}
In this section, we extend our results obtained in Section~\ref{sec:binary} to a more general class of joint PMFs. Let $ \mathcal{X}=\mathcal{Y}=\{0,1,...,M-1\}$ and the class of PMFs be
\begin{eqnarray} \label{eq:general}
P_\theta(X=i,Y=j)=
\begin{cases}
\frac {\theta} M ~~\text{, if}~~(i+j) \not= M-1 \\
\frac {1-\theta(M-1)} M~~\text{, otherwise,}
\end{cases}
\end{eqnarray}
where $\theta \in \Theta = (0,\frac 1 {(M-1)})$. Notice that each information source has a uniform marginal PMF and setting $M=2$ recovers the binary case.



Similar to the binary case, we first use a linear code and minimum entropy decoder to reconstruct $Z^n=(X^n+Y^n)\mod M$ at the decoder and then design an estimator from $Z^n$. In this section, we use$\mod M$ to denote element-wise mod operation,

In particular, we use a linear code $f$ that maps $\{0,1,...,M-1\}^n$ to $\{0,1,...,M-1\}^k$. 
The encoded messages of the realizations $x^n \in \{0,1,...,M-1\}^n$ and $y^n \in \{0,1,...,M-1\}^n$ are
\begin{eqnarray}
f(x^n)&=&x^nA, \nonumber \\
f(y^n)&=&y^nA,
\end{eqnarray}
in which the code matrix $A$ has $n$ rows and $k$ columns with each entry taking values from $\{0,1,...,M-1\}$. The coding rate is
\begin{equation}
R= \frac k n \log {M}.
\end{equation}

The decoder first combines the encoded messages into a single message as
\begin{equation}
f(x^n) + f(y^n)\mod M.
\end{equation}


The final decoded message is given by
\begin{equation}
\hat z^n= \phi(f(z^n)),
\end{equation}
where $\phi$ the the minimum entropy decoding function. Following the same error probability analysis for the binary case, we can show that there exists a codebook $f^*$ (and hence a particular encoding matrix $A$) that achieves a probability of decoding error $P_{e,f^*}^{(n)} \rightarrow 0$ exponentially as $n \rightarrow \infty$ if
\begin{equation}
R \ge H_{\theta}(Z)=H_{\theta}(X|Y)=H_{\theta}(Y|X).
\end{equation}

Therefore, as long as
\begin{eqnarray}
R_X \ge H_{\theta}(X|Y), \label{eq:mrx}\\
R_Y \ge H_{\theta}(Y|X),\label{eq:mry}
\end{eqnarray}
we can reconstruct $Z^n=X^n+Y^n\mod M$ at the decoder with an exponentially diminishing error probability.

After obtaining $\hat{Z}^n$, which is equal to $Z^n$ with a probability converging to 1 exponentially, our estimator is
\begin{eqnarray}
\hat \theta= \frac {n-n(M-1|\hat{Z}^n)}{n(M-1)}.~\label{eq:Mestimator}
\end{eqnarray}

Following similar steps as those in the binary case, we can show that, if~\eqref{eq:mrx}-\eqref{eq:mry} are satisfied, the estimator in~\eqref{eq:Mestimator} is asymptotically unbiased and achieves a variance index
\begin{equation} \label{var}
V[\hat \theta] = \frac {\theta[1-\theta(M-1)]} {M-1}.
\end{equation}
We can further show that~\eqref{var} is the best variance index that can be achieved even in the centralized case. This implies that our scheme achieves the centralized performance using rates outside the Slepian-Wolf region.

\section{Simulation Results} \label{sec:simulation}

In this section, we compare our estimator to the best known estimator by Han and Amari~\cite{han}. 
In the simulation, we fix the unknown parameter $\theta$ and change the encoding rates $R_X$ and $R_Y$ such that
\begin{eqnarray}
R_X=R_Y=R\ge H_{\theta}(Z).
\end{eqnarray}




We conduct the comparison for $M=2$ and $M=4$ respectively.

For $M=2$, the variance index of our estimator is (\ref{25}),
while the variance index of the estimator by Han and Amari is calculated in example 3 of \cite{han}
\begin{eqnarray}
&&(\text{Var}_\theta[\hat \theta])_{HA} \simeq  \\ \nonumber &&\frac 1 {16a^2b^2}\bigg\{\frac 1 4-\bigg(\theta-\frac 1 2\bigg)^2
[1-(1-4a^2)(1-4b^2)]\bigg\},
\end{eqnarray}
where $a$ and $b$ are functions of $R_X$ and $R_Y$, whose expressions are given in (14.12) and (14.13) of \cite{han}, respectively.

\begin{figure}[htb]
\centering
\includegraphics[width=0.4\textwidth]{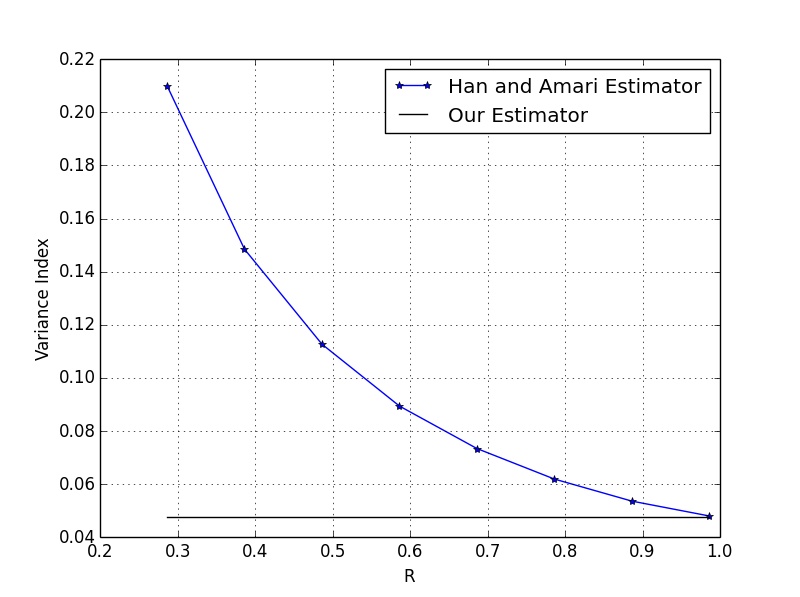}\\
\caption{Performance Comparison: $\theta=0.05$, $M=2$}
\label{Fig-3}
\end{figure}

\begin{figure}[htb]
\centering
\includegraphics[width=0.4\textwidth]{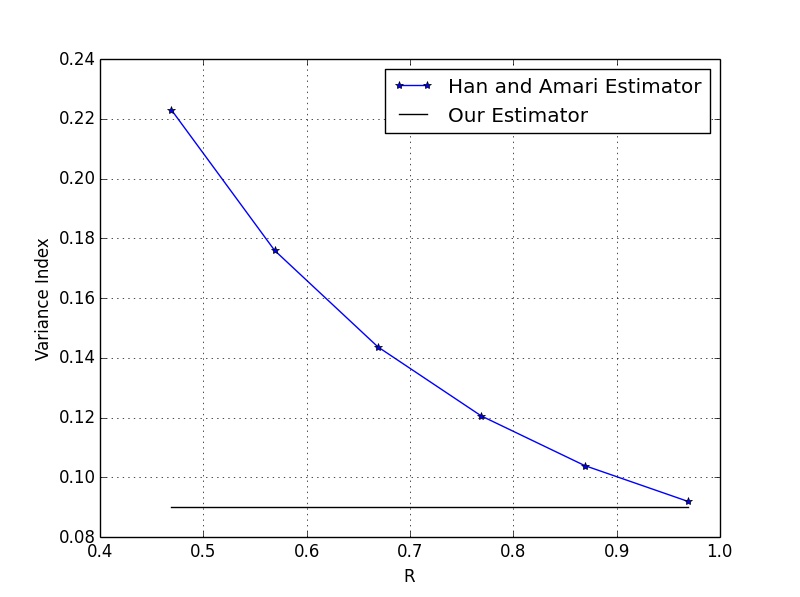}\\
\caption{Performance Comparison: $\theta=0.9$, $M=2$}
\label{Fig-4}
\end{figure}

Fig.~\ref{Fig-3} and Fig.~\ref{Fig-4} show the performance gain, in terms of the variance index, of our estimator over Han and Amari's estimator for binary symmetric sources ($M=2$) at two different values of the unknown parameter, $\theta=0.05$ and $\theta=0.9$, respectively. The performance difference is more noticeable at low rates. For $\theta=0.05$, the Slepian-Wolf sum rate is $R_X+R_Y=1.29$ bits, while our estimator requires a sum rate of $R_X+R_Y=2R=0.57$ bits. For $\theta=0.9$, the Slepian-Wolf sum rate is $1.47$ bits, while our estimator requires a sum rate of $0.94$ bits. Furthermore, for Han and Amari's estimator to achieve the centralized performance, the required sum-rate is $2$ bits for both cases, which is not only much larger than the sum rate required in our estimator but also much larger than the sum-rate required by conditions specified in the Slepian-Wolf rate region.

For $M=4$, the variance index of our estimator is given in (\ref{var}).
The performance of Han and Amari's estimator relies on the choice of the test channels. The authors did not specify an optimal choice of the test channels in order to extend example 3 in \cite{han} to the case of $M=4$. We find the following mapping to be a natural extension:
\begin{eqnarray}
Q=
\begin{cases}
0 \text{, if}~~X \in \{0,1\} \\
1 \text{, if}~~X \in \{2,3\},
\end{cases}
T=
\begin{cases}
0 \text{, if}~~Y \in \{0,1\} \\
1 \text{, if}~~Y \in \{2,3\}.
\end{cases}
\end{eqnarray}
Notice that $(Q,T)$ are distributed according to a binary symmetric PMF with an unknown parameter $\alpha = 2\theta$. Using an estimator $\hat \theta = \frac {\hat \alpha} 2$ leads to the following expression for the variance index:
\begin{eqnarray}
&&(\text{Var}_\theta[\hat \theta])_{HA} \simeq  \\ \nonumber&& \frac 1 {64a^2b^2}\bigg\{\frac 1 4-\bigg(2\theta-\frac 1 2\bigg)^2 [1-(1-4a^2)(1-4b^2)]\bigg\}.
\end{eqnarray}

\begin{figure}[htb]
\centering
\includegraphics[width=0.4\textwidth]{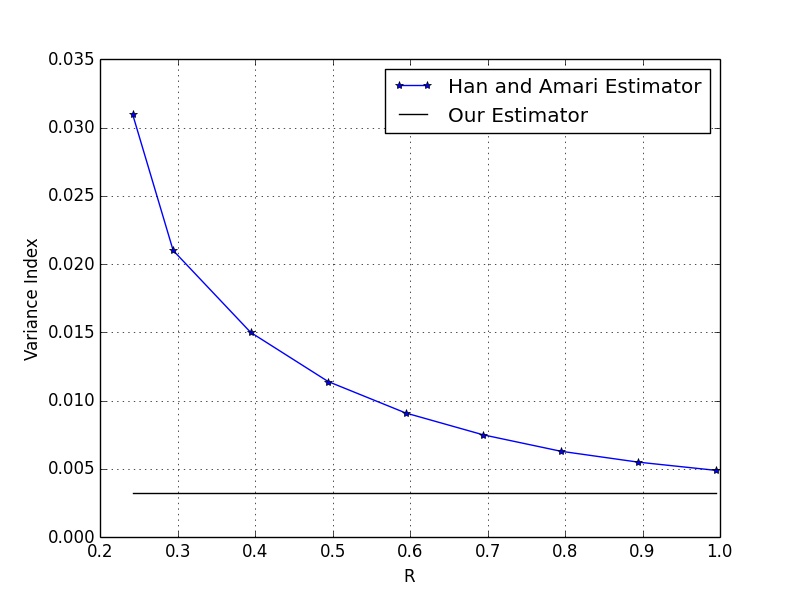}\\
\caption{Performance Comparison: $\theta=0.01$, $M=4$}
\label{Fig-5}
\end{figure}

Fig.~\ref{Fig-5} compares the variance indices achieved using our estimator and Han and Amari's estimator for $M=4$ and $\theta=0.01$. It is clear that our estimator outperforms that of Han and Amari's estimator. Furthermore, the performance difference is more noticeable at low rates. The Slepian-Wolf sum rate is $2.24$ bits, while our estimator requires a sum rate of $0.48$ bits.

\section{Conclusion} \label{sec:conclusion}
In this paper, we have answered the question: Are Slepian-Wolf rates necessary to achieve the same estimation performance as that of the centralized case? We have showed that the answer to this question is negative by constructing an asymptotically MVUE for binary symmetric sources using rates less than the conditions required in the Slepian-Wolf rate region. We have also extended our work to a general class of information sources by modifying the encoding/decoding scheme and the estimation algorithm. We have further compared our results to the best known estimator by Han and Amari to show the superiority of our estimator.
\bibliographystyle{ieeetr}
\bibliography{macros,allerton, }

\end{spacing}

\end{document}